\newenvironment{proof}[1][\proofname]{\par\normalfont
  \topsep6pt plus6pt\trivlist\item[\hskip\labelsep\itshape
  #1\@addpunct{:}]\ignorespaces}{\qed\endtrivlist}
\newcommand{\proofname}{Proof}
\DeclareRobustCommand{\qed}{%
  \ifmmode
  \else\leavevmode\unskip\penalty9999\hbox{}\nobreak\hfill\fi
  \quad\hbox{\qedsymbol}}
\newcommand{\qedsymbol}{\openbox}
\newcommand{\openbox}{\leavevmode\hbox to.77778em{%
    \hfil\vrule\vbox to.675em{%
      \hrule width.6em\vfil\hrule}\vrule\hfil}}
\newtheorem{proposition}{Proposition}
\newtheorem{theorem}{Theorem}
\newtheorem{example}{Example}
\newtheorem{definition}{Definition}
\newtheorem{remark}{Remark}
\newcommand{\Prb}{\mathsf{P}}\newcommand{\Exp}{\mathsf{E}}
\newcommand{\Var}{\mathsf{Var}}
\newcommand{\C}{\mathbb{C}}\newcommand{\N}{\mathbb{N}}
\newcommand{\R}{\mathbb{R}}
\newcommand{\dd}{\mathrm{d}}\newcommand{\ee}{\mathrm{e}}
\newcommand{\B}{\mathcal{B}}
\newcommand{\K}{\mathcal{K}}
\newcommand{\GF}{\mathcal{G}}\newcommand{\Lpl}{\mathcal{L}}
\newcommand{\SINR}{\mathsf{SINR}}\newcommand{\SIR}{\mathsf{SIR}}
\newcommand{\Gam}{\mathrm{Gam}}
\newcommand{\ind}[1]{\boldsymbol{1}_{#1}}
\newcommand{\eqd}{\mathrel{\buildrel d \over =}}
\let\Bar\overline\let\Tilde\widetilde
\title{Spatial modeling and analysis of cellular networks\\ using
  the Ginibre point process: A tutorial}
\author{Naoto Miyoshi$^\dagger$ \and Tomoyuki Shirai$^\ddagger$}
\date{$^\dagger$Department of Mathematical and Computing Science,
  Tokyo Institute of Technology\\
  $^\ddagger$Institute of Mathematics for
  Industry, Kyushu University}
\begin{document}\sloppy\allowbreak\allowdisplaybreaks
\maketitle
\begin{abstract}
Spatial stochastic models have been much used for performance analysis
of wireless communication networks.
This is due to the fact that the performance of wireless networks
depends on the spatial configuration of wireless nodes and the
irregularity of node locations in a real wireless network can be
captured by a spatial point process.
Most works on such spatial stochastic models of wireless networks have
adopted homogeneous Poisson point processes as the models of wireless
node locations.
While this adoption makes the models analytically tractable, it
assumes that the wireless nodes are located independently of each
other and their spatial correlation is ignored.
Recently, the authors have proposed to adopt the Ginibre point
process---one of the determinantal point processes---as the deployment
models of base stations (BSs) in cellular networks.
The determinantal point processes constitute a class of repulsive 
point processes and have been attracting attention due to their
mathematically interesting properties and efficient simulation methods.
In this tutorial, we provide a brief guide to the Ginibre point
process and its variant, $\alpha$-Ginibre point process, as the models
of BS deployments in cellular networks and show some existing results
on the performance analysis of cellular network models with
$\alpha$-Ginibre deployed BSs.
The authors hope the readers to use such point processes as a tool for
analyzing various problems arising in future cellular networks.
\\
\textbf{Keywords:}
Spatial stochastic models, cellular networks, spatial point processes,
Ginibre point process, signal-to-interference-plus-noise ratio,
coverage probability.
\end{abstract}

\section{Introduction}

Spatial stochastic models have been much used for performance analysis
of wireless communication networks and the volume of the literature
has been increasing rapidly, where the wireless nodes are located at
random on the two dimensional Euclidean plane according to some
stochastic point processes (see, e.g., the tutorial articles
\cite{AndrGantHaenJindWebe10,HaenAndrBaccDousFran09,ElsaHossHaen13,ElsaSultAlouWin16}
and monographs \cite{BaccBlas09a,BaccBlas09b,Haen13,Mukh14}).
This is due to the fact that the performance of wireless networks
critically depends on the spatial configuration of wireless nodes and
the irregularity of node locations in a real wireless network can be
well captured by a spatial point process.
Even for cellular networks, many researchers have proposed and
analyzed the spatial stochastic models to cope with various problems
arising from the current explosive growth of mobile data traffic, such
as cognitive radio~\cite{ElsaHoss14}, interference
cancellation~\cite{WildQuekKounRabbSlum14} and so on (a thorough
survey on recent progress is found in \cite{ElsaSultAlouWin16}).

Most works on such spatial stochastic models of wireless networks have
adopted homogeneous Poisson point processes as the models of
wireless node locations and this has been the case for the cellular
networks (see, e.g.,
\cite{AndrBaccGant11,DhilGantBaccAndr12,JoSangXiaAndr12,Mukh12,DiReGuidCora13,ElsaHoss14,WildQuekKounRabbSlum14}).
While this adoption makes the models analytically tractable, it
assumes that the wireless nodes are located independently of each
other and their spatial correlation is ignored.
On the other hand, the base stations~(BSs), in particular macro BSs,
in a cellular network tend to be deployed rather systematically, such
that any two BSs are not too close, and thus a spatial model based on
a point process with repulsive nature seems more
desirable~(see \cite{GuoHaen13}).
Recently, the authors have proposed to adopt the Ginibre point process
and its variant, $\alpha$-Ginibre point process, as the models of BS
deployments in cellular networks and have derived some analytical and
numerical results
(\cite{MiyoShir14a,MiyoShir14b,NakaMiyo14,KobaMiyo14,NagaMiyoShir14,MiyoShir15,KobaMiyo16}).
The Ginibre point process is known as a main example of the
determinantal point processes, which constitute a class of repulsive
point processes and have been attracting attention due to their
mathematically interesting properties and efficient simulation methods
(see, e.g., \cite{Sosh00,ShirTaka03,HougKrisPereVira09,LavaMollRuba15}
for details).
The $\alpha$-Ginibre point process is also one of the determinantal
point processes and is introduced in \cite{Gold10} for interpolating
between the original Ginibre and homogeneous Poisson point processes
by a parameter~$\alpha\in(0,1]$; that is, the original Ginibre point
process is obtained by taking $\alpha=1$ and it converges weakly to
the homogeneous Poisson point process as $\alpha\to0$.
Indeed, the Ginibre and some other determinantal point processes have
been recognized as a promising class of BS deployment models for
cellular networks due to the observations that they can capture the
spatial characteristics of actual BS deployments (see
\cite{DengZhouHaen15,GomeVassVergMartDecrChen15,LiBaccDhilAndr15}).

A purpose of this tutorial is to provide a brief guide to the Ginibre
and $\alpha$-Ginibre point processes in order for the readers to use
them as a tool for analyzing the performance of cellular networks and
challenging themselves to various new problems arising in modern
cellular networks.
On this account, after reviewing some fundamental and useful
properties of these spatial point processes, we show some existing
results on the performance analysis of cellular network models with
$\alpha$-Ginibre deployed BSs.
For comparison, we mention the results on the related Poisson deployed
BS models as well.

The organization of the paper is as follows.
In the next section, we provide a general spatial stochastic model of
downlink cellular networks and give a few examples.
The signal-to-interference-plus-noise ratio (SINR)---that is a key
quantity for the connectivity in wireless networks---for a typical
user is defined there.
In Section~\ref{sec:GPP}, we introduce the $\alpha$-Ginibre point
process as a model of the BS deployments in cellular networks, where
we first give its definition and then review its fundamental and
useful properties.
In Section~\ref{sec:Analysis}, we show some existing results on the
coverage analysis of cellular networks with $\alpha$-Ginibre deployed
BSs; that is, we give numerically computable forms of coverage
probability---the probability that the SINR for the typical user
achieves a target threshold---for the example models taken in
Section~\ref{sec:Model}.
We finally suggest a problem and the promising direction for future
development in Section~\ref{sec:cncl}.

\section{Spatial model of downlink cellular networks}\label{sec:Model}

We first define a general spatial stochastic model of downlink
cellular wireless networks and then give two examples; one is the most
basic model of a homogeneous single-antenna network and the other is
of a heterogeneous multi-tier multi-antenna network.

Let $\Phi$ denote a point process on $\R^2$ and $X_i$, $i\in\N$,
denote the points of $\Phi$, where the order of $X_1,X_2,\ldots$ is
arbitrary.
Each point~$X_i$, $i\in\N$, represents the location of a BS in a
cellular network and we refer to the BS located at $X_i$ as BS~$i$.
Assuming that the point process $\Phi$ is simple almost surely (a.s.)
and stationary with positive and finite intensity, we focus on a
typical user located at the origin~$o=(0,0)$.
The transmission power of signal from BS~$i$, $i\in\N$, is denoted by
$P_i$.
The random propagation effect of fading and shadowing on the signal
from BS~$i$ to the typical user is denoted by $H_i$, $i\in\N$, when
the BS~$i$ works as a transmitter to the typical user while it is
denoted by $G_i$ when the BS~$i$ works as an interferer for the
typical user, where $H_i$ and $G_i$, $i\in\N$, are nonnegative random
variables.
The path-loss function representing the attenuation of signals with
distance from BS~$i$ is given by $L_i(r)$, $r>0$, where each $L_i$ is
a randomly chosen nonincreasing function on $(0,\infty)$.
Our network model is then described as the stationary marked point
process $\Tilde{\Phi} = \{(X_i, (P_i, H_i, G_i, L_i))\}_{i\in\N}$.

The downlink SINR for the typical user at the origin is defined by
\begin{equation}\label{eq:SINR}
  \SINR_o
  = \frac{S_o(\eta(o))}{I_o(\eta(o)) + w_o},
\end{equation}
where $\eta(x)$ denotes the index of the BS associated with the user
located at $x\in\R^2$ and is determined by a certain association rule
(see, e.g., Examples~\ref{ex1} and~\ref{ex2} below), $S_o(i) =
P_i\,H_i\,L_i(|X_i|)$, $i\in\N$, denotes the desired signal power when
the typical user is served by the BS~$i$ and $I_o(i)$ denotes the
cumulative interference power from all the BSs except BS~$i$ received
by the typical user; that is,
\begin{equation}\label{eq:interference}
  I_o(i) = \sum_{j\in\N\setminus\{i\}} P_j\,G_j\,L_j(|X_j|).
\end{equation}
Also, $w_o$ in \eqref{eq:SINR} denotes a nonnegative constant
representing the noise power at the origin.

\begin{example}[Homogeneous single-antenna network]\label{ex1}
\mbox{}  
The most simple and basic model is that of the homogeneous
single-antenna network, where all the BSs have the same level of
transmission power denoted by a constant~$p$ (i.e.\ $P_i=p$, $i\in\N$).
The propagation effects~$(H_i, G_i)$, $i\in\N$, are independent and
identically distributed (i.i.d.), and also independent of
$\Phi=\{X_i\}_{i\in\N}$.
We often assume the Rayleigh fading and ignore the shadowing for
$\{H_i\}_{i\in\N}$; that is, each $H_i$ is an exponentially
distributed random variable with unit mean, denoted by
$H_i\sim\mathrm{Exp}(1)$.
The path-loss function is also common to all the BSs such that
$L_i(r)=\ell(r)$, which we have in mind is, for example,
$\ell(r)=r^{-2\,\beta}$ or $\ell(r) = \min(1, r^{-2\,\beta})$ with
$\beta>1$.
Each user is served by the nearest BS; that is
$\{\eta(x)=i\}=\{|x-X_i|\le |x-X_j|, j\in\N\}$ for $x\in\R^2$.
Due to the homogeneity of the BSs, the nearest BS association is now
equivalent to the maximum average received power association since
$\Exp(S_o(i)\mid X_i)=p\,\Exp(H_i)\ell(|X_i|)$, where $\Exp(H_i)$ is
identical for all $i\in\N$ and $\ell$ is nonincreasing.
\end{example}

\begin{example}[Multi-tier multi-antenna network]\label{ex2}
Let $K$ denote a positive integer and $\K = \{1,2, \ldots$, $K\}$.
Each BS is classified into one of $K$ distinct tiers (classes) and a
BS of tier~$k\in\K$ has the specific transmission power~$p_k$, the
number of antennas~$m_k$, the number of users to be
served~$\psi_k$~($\le m_k$) and the path-loss function~$\ell_k(r)$.
This model represents the multi-input multi-output (MIMO)
transmission in a heterogeneous network (HetNet).
Assuming the Rayleigh fading on all links and the single receiving
antenna for each user, the discussion in \cite{HuanPapaVenk12} (see,
e.g, \cite{DhilKounAndr13} also) enables us to suppose that the
channel power distributions of both the associated and interfering
links follow the Erlang distributions with different shape parameters;
that is, when the BS~$i$ is of tier~$k$, $H_i\sim\Gam(\delta_k, 1)$
with $\delta_k = m_k - \psi_k+1$ and $G_i\sim\Gam(\psi_k,1)$, where
``$\Gam$'' denotes the Gamma distribution.
Let $\xi_i$ denote the tier of BS~$i$.
This model is then described as the marked point
process~$\Phi_\xi = \{(X_i,\xi_i)\}_{i\in\N}$ since $(P_i,
L_i)=(p_{\xi_i}, \ell_{\xi_i})$, and $(H_i, G_i)$, $i\in\N$, are
conditionally mutually independent given $\xi_i$, $i\in\N$.
As for the BS association, we introduce another parameter~$b_k>0$,
$k\in\K$, called the bias factor, and adopt the flexible cell
association rule (see \cite{JoSangXiaAndr12,GuptDhilVishAndr14}); that
is, each user is served by the BS that supplies the maximum
biased-average-received-power;
\[
  \{\eta(x)=i\}
  = \bigl\{
      b_{\xi_i} p_{\xi_i} \delta_{\xi_i} \ell_{\xi_i}(|x-X_i|)
       \ge b_{\xi_j} p_{\xi_j} \delta_{\xi_j}\ell_{\xi_j}(|x-X_j|),\:
          j\in\N
     \bigr\},
\]
where $p_k\delta_k\ell_k(|X_i|) = \Exp(S_o(i)\mid X_i, \xi_i=k)$
represents the average received signal power for the typical user from
the BS~$i$ when this BS is of tier~$k$.
\end{example}

\section{$\alpha$-Ginibre point processes and their
  properties}\label{sec:GPP}

In this section, we give a brief introduction to the Ginibre and
$\alpha$-Ginibre point processes.
Since these point processes belong to a class of the determinantal
point processes on the complex plane~$\C\simeq\R^2$, we first define a
general determinantal point process on $\R^d$.
Readers are referred to
e.g.\ \cite{Sosh00,ShirTaka03,HougKrisPereVira09,LavaMollRuba15} for
further details.

\subsection{Determinantal point processes}

Let $\Phi$ denote a simple point process on $\R^d$ and let
$\rho^{(n)}$ denote its $n$th product density functions (joint
intensities) with respect to a locally finite measure $\nu$ on $\R^d$;
that is, for any symmetric and continuous function $f$ with bounded
support on $\R^{d\times n}$,
\begin{align}\label{eq:productdensity}
  \Exp\Biggl[
    \sum_{\substack{X_1,\ldots,X_n\in\Phi\\ X_i\ne X_j, i\ne j}}
       f(X_1,X_2,\ldots,X_n)
  \Biggr]
  = \int\!\!\int\!\cdots\!\int_{\R^{d\times n}}
      f(x_1,x_2,\ldots,x_n)
       \rho^{(n)}(x_1,x_2,\ldots,x_n)\,
     \prod_{i=1}^n\nu(\dd x_i).
\end{align}
The point process $\Phi$ is then said to be a determinantal point
process on $\R^d$ with kernel~$K$:~$\R^d\times\R^d\to\C$ with respect
to the reference measure $\nu$ if the product density
function~$\rho^{(n)}$ in \eqref{eq:productdensity} is given by
\begin{equation}\label{eq:determinantal}
  \rho^{(n)}(x_1,x_2,\ldots,x_n)
  = \det\bigl(K(x_i,x_j)\bigr)_{i,j=1}^n,
\end{equation}
where ``$\det$'' denotes the determinant.
In order for the point process~$\Phi$ to be well-defined, we usually
assume that (i) the kernel~$K$ is continuous on $\R^d\times\R^d$\!,
(ii)~$K$ is Hermitian in the sense that $K(x, y) = \Bar{K(y, x)}$ for
$x$, $y\in\R^d$, where $\Bar{z}$ denotes the complex conjugate of
$z\in\C$ and (iii)~the integral operator on $L^2(\R^d)$ corresponding
to $K$ is of locally trace class with the spectrum in $[0,1]$;
that is, $0\le (Kf, f)\le (f,f)$ for any $f \in L^2(\mathbb{R}^d)$,
where the inner product is given by $(f,g) = \int_{\R^d}
f(x)\,\Bar{g(x)}\,\nu(\dd x)$, and for any bounded set $C\in\B(\R^d)$,
the restriction $K_C$ of $K$ on $C$ has eigenvalues~$\kappa_{C,i}$,
$i\in\N$, satisfying $\sum_{i\in\N}\kappa_{C,i}<\infty$.
Under these conditions, $\kappa_{C, i}\in[0,1]$ holds for any bounded
$C\in\B(\R^d)$ and $i\in\N$ (see, e.g.,
\cite[Chap.~4]{HougKrisPereVira09}).
Then the number of points of $\Phi$ falling in $C$ has the
distribution of the sum of independent Bernoulli random
variables~$B_{C,i}$ with $\Prb(B_{C,i}=1)=\kappa_{C,i}$, $i\in\N$;
that is,
\begin{equation}\label{eq:DPP0}
  \Phi(C) \eqd \sum_{i\in\N}B_{C,i},
\end{equation}
where ``$\eqd$'' denotes equality in distribution.
This immediately leads to the expectation and variance of $\Phi(C)$;
\[
  \Exp\Phi(C) = \sum_{i\in\N}\kappa_{C,i},
  \quad
  \Var\,\Phi(C) = \sum_{i\in\N}\kappa_{C,i}\,(1-\kappa_{C,i}),
\]
where it should be noted that $\Var\,\Phi(C)\le\Exp\Phi(C)<\infty$ for
any bounded $C\in\B(\R^d)$.

The Palm distribution is a basic concept in the point process theory
and formalizes the notion of the conditional distribution of a point
process given that it has a point at a specific location.
The following proposition states that a determinantal point process is
closed under the operation of taking the reduced Palm
distribution\footnote{%
The reduced Palm distribution formalizes the notion of the conditional
distribution of a point process given that the process has a point at
a specific location but excluding this point on which the process is
conditioned.}.

\begin{proposition}[\cite{ShirTaka03}]
Let $\Phi$ denote a determinantal point process on $\R^d$ with
kernel~$K$ with respect to the reference measure~$\nu$.
Then, for almost every $x_0\in\R^d$ with respect to the
measure~$K(x,x)\,\nu(\dd x)$, $\Phi$ is also determinantal under the
reduced Palm distribution given a point at $x_0$ and the corresponding
kernel~$K^{x_0}$ is given by
\begin{equation}\label{eq:Palm}
  K^{x_0}(x,y)
  = \frac{K(x,y)\,K(x_0,x_0)-K(x,x_0)\,K(x_0,y)}
         {K(x_0,x_0)},
\end{equation}
whenever $K(x_0,x_0)>0$.
\end{proposition}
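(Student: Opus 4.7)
The plan is to identify the reduced Palm distribution by computing all of its joint intensities and recognizing the resulting family as the determinantal family associated with the kernel $K^{x_0}$. Since a DPP on $\R^d$ satisfying the hypotheses listed above (continuous Hermitian kernel of locally trace class with spectrum in $[0,1]$) is uniquely determined by its product density functions, it will suffice to show that the $n$th product density of $\Phi$ under the reduced Palm distribution at $x_0$ equals $\det(K^{x_0}(x_i,x_j))_{i,j=1}^n$ for every $n\ge 1$.

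First I would invoke the standard relation between joint intensities and the reduced Palm distribution: writing $\rho^{(n)}_{x_0}$ for the $n$th joint intensity of $\Phi$ under the reduced Palm distribution given a point at $x_0$, one has, for $\rho^{(1)}(x_0)=K(x_0,x_0)>0$,
\[
  \rho^{(n)}_{x_0}(x_1,\ldots,x_n)
  = \frac{\rho^{(n+1)}(x_0,x_1,\ldots,x_n)}{\rho^{(1)}(x_0)}
  = \frac{\det\bigl(K(y_i,y_j)\bigr)_{i,j=0}^{n}}{K(x_0,x_0)},
\]
where $y_0=x_0$ and $y_i=x_i$ for $i\ge 1$. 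This identity follows from the Campbell--Mecke formula applied iteratively to test functions of $n+1$ variables and is the reason the almost-everywhere qualifier is expressed with respect to $K(x,x)\,\nu(\dd x)$.

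Next I would perform the key algebraic step, a block-determinant (Schur complement) computation. Partitioning the $(n+1)\times(n+1)$ matrix $\bigl(K(y_i,y_j)\bigr)_{i,j=0}^{n}$ with the single row/column indexed by $0$ separated from the rest, the block determinant formula yields
\[
  \det\bigl(K(y_i,y_j)\bigr)_{i,j=0}^{n}
  = K(x_0,x_0)\,
    \det\!\Bigl(K(x_i,x_j)-\frac{K(x_i,x_0)\,K(x_0,x_j)}{K(x_0,x_0)}\Bigr)_{i,j=1}^{n}.
\]
Dividing by $K(x_0,x_0)$ and recognizing the matrix entries as precisely $K^{x_0}(x_i,x_j)$ defined in \eqref{eq:Palm} produces $\rho^{(n)}_{x_0}(x_1,\ldots,x_n)=\det\bigl(K^{x_0}(x_i,x_j)\bigr)_{i,j=1}^{n}$, which is the desired determinantal form.

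The routine parts are the determinant manipulation and the Campbell--Mecke identification of $\rho^{(n)}_{x_0}$. The main obstacle, and the place where care is required, is the justification that the reduced Palm law is uniquely pinned down by the family $\{\rho^{(n)}_{x_0}\}_{n\ge 1}$ and that $K^{x_0}$ itself defines an admissible DPP kernel. For the former one must appeal to a moment/uniqueness result for DPPs; for the latter one verifies that $K^{x_0}$ inherits Hermitian symmetry from $K$, is continuous (away from the $\nu$-null set where $K(x_0,x_0)=0$), and that its associated integral operator remains locally trace class with spectrum contained in $[0,1]$ — the spectral bound is the delicate point, corresponding to projecting out the one-dimensional subspace spanned by $K(\cdot,x_0)$ from the operator associated with $K$.
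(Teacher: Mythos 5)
Your proposal is correct and follows the standard route: the paper itself states this Proposition as a cited result from Shirai--Takahashi without giving a proof, and the argument you outline (identifying the Palm joint intensities via $\rho^{(n)}_{x_0}=\rho^{(n+1)}(x_0,\cdot)/\rho^{(1)}(x_0)$ and then applying the Schur-complement determinant identity) is precisely the one used in that reference. Your closing remarks correctly flag the only genuinely delicate points, namely uniqueness of the process given its correlation functions and admissibility of $K^{x_0}$ (positivity of the Schur complement and the spectral bound inherited from $K$).
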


\subsection{$\alpha$-Ginibre point processes}

For $\alpha\in(0,1]$, a determinantal point process~$\Phi_\alpha$ on
$\C$~($\simeq\R^2$) is said to be an $\alpha$-Ginibre point process
when its kernel~$K_\alpha$ on $\C\times\C$ is given by
\begin{equation}\label{eq:alpha-kernel}
  K_\alpha(z,w)=\ee^{z\Bar{w}/\alpha},\quad z, w\in\C,
\end{equation}
with respect to the modified Gaussian measure
\begin{equation}\label{eq:alpha-measure}
  \nu_\alpha(\dd z) = \frac{1}{\pi}\,e^{-|z|^2/\alpha}\,\mu(\dd z),
\end{equation}
where $\mu$ denotes the Lebesgue measure on $(\C,\B(\C))$.
The choice of pair~$(K_\alpha, \nu_\alpha)$ is not unique and the
kernel~$\Tilde{K}_\alpha(z,w) =
\pi^{-1}\,e^{-(|z|^2+|w|^2)/(2\alpha)}\,e^{z\Bar{w}/\alpha}$ with
respect to the Lebesgue measure~$\mu$ defines the same process as
$\Phi_\alpha$.
The process with $\alpha=1$ gives the original Ginibre point process.

Let $\Tilde{\rho}_\alpha^{(n)}$, $n\in\N$, denote the product density
functions of $\Phi_\alpha$ with respect to the Lebesgue measure. 
For example, the first two product densities are then given by
\eqref{eq:determinantal} as
\begin{align}
  \Tilde{\rho}_\alpha^{(1)}(z)
  &= \Tilde{K}_\alpha(z,z) = \pi^{-1},
  \label{eq:alpha-intensity}\\
  \Tilde{\rho}_\alpha^{(2)}(z,w)
  &= \frac{1-\ee^{-|z-w|^2/\alpha}}{\pi^2}.
\end{align}
Note that both the product densities are motion invariant (invariant
under translation and rotation).
In fact, one can show that the $n$th product density is motion
invariant for each $n\in\N$, and hence the $\alpha$-Ginibre point
process is motion invariant; that is, stationary and isotropic.
We further see that $\Tilde{\rho}_\alpha^{(2)}(z,w)\to \pi^{-2}$ as
$\alpha\to0$, converging to the second-order product density of the
homogeneous Poisson point process with intensity $\pi^{-1}$.
Again, one can show that $\Phi_\alpha$ converges weakly to the
homogeneous Poisson point process with intensity $\pi^{-1}$ as
$\alpha\to0$ (see \cite{Gold10}).
This suggests that the $\alpha$-Ginibre point process constitutes an
intermediate class between the original Ginibre and homogeneous
Poisson point processes by the parameter~$\alpha\in(0,1]$.
Figure~\ref{fig:samples} shows samples of the Poisson and
$\alpha$-Ginibre point processes with the same intensity.
We can see that the configuration of the points becomes more regular
as the value of $\alpha$ becomes larger.
  
\begin{figure*}[t]%
\begin{center}
\hbox to\hsize{\kern-3em%
\includegraphics[width=.45\hsize]{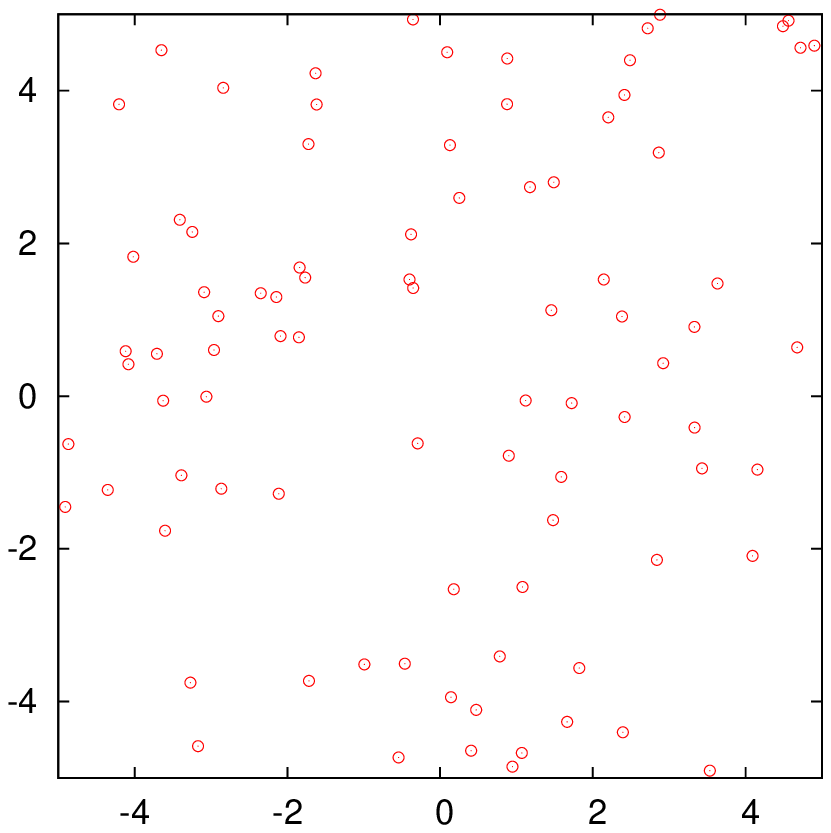}\hss
\includegraphics[width=.45\hsize]{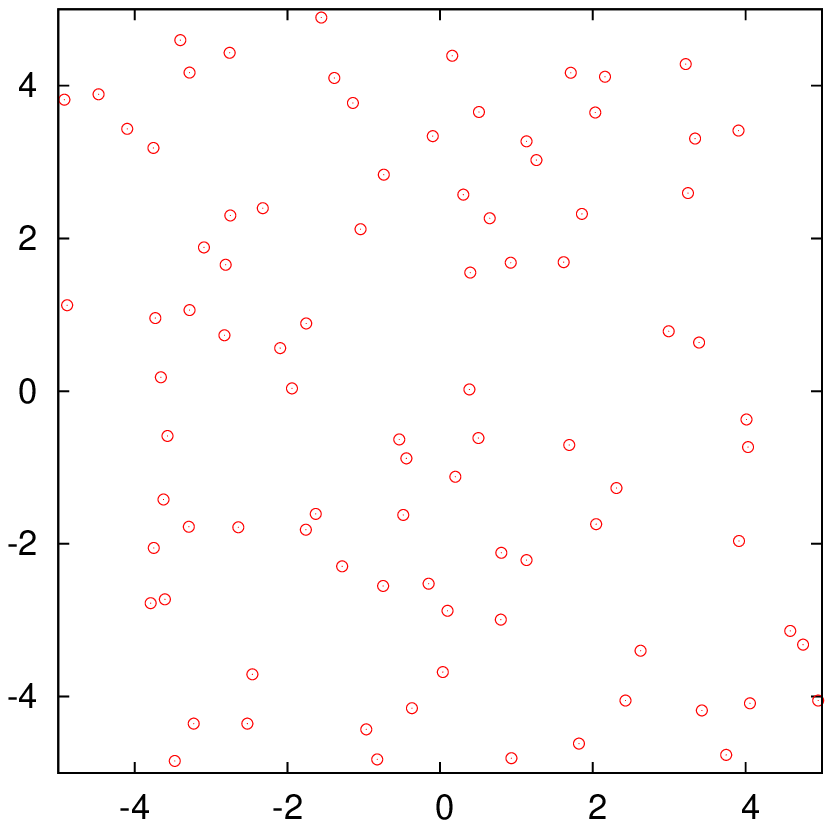}\hss
\includegraphics[width=.45\hsize]{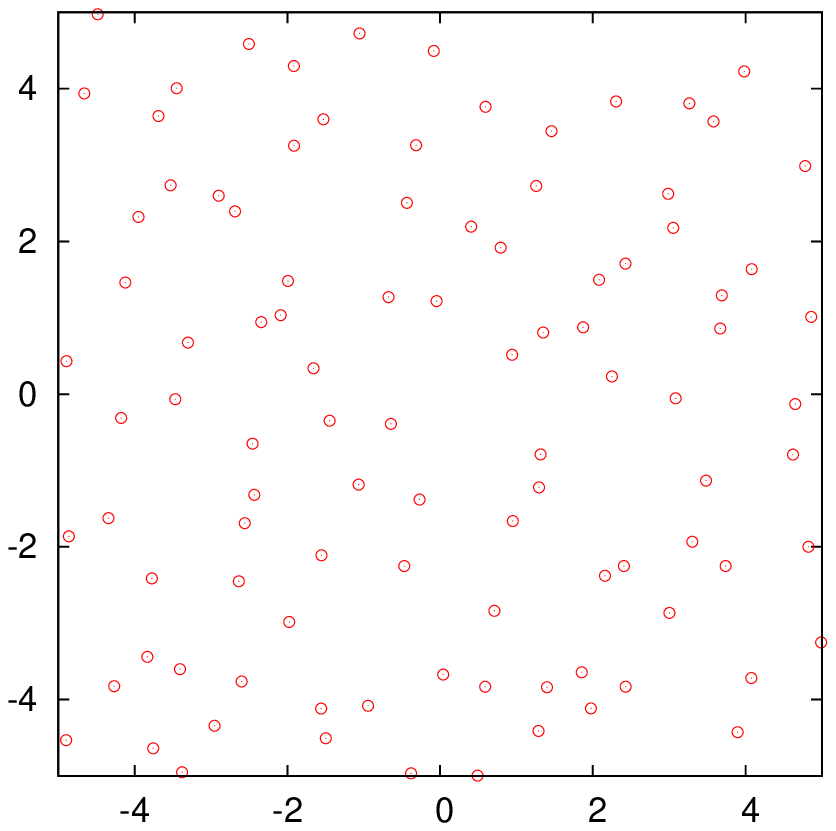}%
\kern-2.5em}%
\end{center}%
\caption{Samples of the Poisson~($\alpha\to0$, left), $\alpha$-Ginibre
  ($\alpha=0.5$, center) and original Ginibre ($\alpha=1$, right)
  point processes with the same intensity.~\cite{NakaMiyo14}}%
  \label{fig:samples}%
\end{figure*}

\begin{remark}\label{rmk:scale}
As seen in~\eqref{eq:alpha-intensity}, the $\alpha$-Ginibre point
process has the intensity~$\pi^{-1}$ with respect to the Lebesgue
measure; that is, for $C\in\B(\C)$,
\[
  \Exp\Phi(C)
  = \int_C \rho_\alpha^{(1)}(z)\,\nu_\alpha(\dd z)
  = \frac{\mu(C)}{\pi}.
\]
However, we can consider the process with an arbitrary fixed
intensity~$\lambda\in(0,\infty)$ by scaling.
The kernel and reference measure of the scaled $\alpha$-Ginibre point
process with intensity~$\lambda$ are respectively given by
$K_{\alpha,\lambda}(z,w) = \ee^{\pi\lambda z\Bar{w}/\alpha}$ and
$\nu_{\alpha,\lambda}(\dd z) =
\lambda\,\ee^{-\pi\lambda|z|^2/\alpha}\,\mu(\dd z)$.
Or equivalently, the kernel~$\Tilde{K}_\alpha^{(\lambda)}(z,w) =
\lambda\,\ee^{-\pi\lambda(|z|^2+|w|^2)/(2\alpha)}\,\ee^{\pi\lambda
  z\Bar{w}/\alpha}$ with respect to the Lebesgue measure $\mu$ defines
the same process.
\end{remark}

We next see the nonzero eigenvalues and the corresponding
eigenfunctions of the integral operator corresponding to the
kernel~$K_\alpha$.
Let
\begin{equation}\label{eq:eigenfunction}
  \phi_{\alpha,i}(z)
  = \frac{z^{i-1}}{\sqrt{(i-1)!\,\alpha^i}},
  \quad i\in\N.
\end{equation}
Then we can check that $\phi_{\alpha,i}$, $i\in\N$, are the
orthonormal eigenfunctions of $K_\alpha$ corresponding to the
eigenvalue~$\alpha$ satisfying
\[
  \int_{\C}
    \phi_{\alpha,i(}z)\,\Bar{\phi_{\alpha,j}(z)}\,
  \nu_\alpha(\dd z)
  = \Biggl\{
      \vcenter{%
      \hbox{$1$\quad for $i=j$,}
      \hbox{$0$\quad for $i\ne j$.}
      }%
\]
Thus, Mercer's spectral expansion~(\cite{Merc09}) holds such that
\[
  K_\alpha(z,w)
  = \sum_{i=1}^\infty
      \alpha\,\phi_{\alpha,i}(z)\,\Bar{\phi_{\alpha,i}(w)},
  \quad z, w\in\C.
\]
Now, let $D_r$ denote the disk on $\C$ centered at the origin with
radius~$r$.
Then $\phi_{\alpha,i}$, $i\in\N$, in \eqref{eq:eigenfunction} are also
orthogonal eigenfunctions (but not normal now) of the
restriction~$K_{\alpha,D_r}$ of $K_\alpha$ on $D_r$ corresponding to
the eigenvalues
\begin{equation}\label{eq:alpha_eigenvalue}
  \kappa_{\alpha,D_r,i}
  = \alpha\,P(i,r^2/\alpha)
  = \alpha\,\frac{\gamma(i,r^2/\alpha)}{\Gamma(i)},
  \quad i\in\N,
\end{equation}
where $P(x,y)= \gamma(x,y)/\Gamma(x)$ denotes the regularized lower
incomplete Gamma function with the lower incomplete Gamma
function~$\gamma(x,y)=\int_0^y t^{x-1}\,\ee^{-t}\,\dd t$ and the usual
Gamma function~$\Gamma(x)=\gamma(x,\infty)$.
Let $\chi_i$, $i\in\N$, denote i.i.d.\ Bernoulli random variables with
$\Prb(\chi_i=1)=\alpha$ and let $Y_i$, $i\in\N$, denote mutually
independent random variables with $Y_i\sim\Gam(i,\alpha^{-1})$, where
$\{\chi_i\}_{i\in\N}$ and $\{Y_i\}_{i\in\N}$ are also independent of
each other.
Then, since $\Prb(Y_i\le r^2) = P(i,r^2/\alpha)$, \eqref{eq:DPP0} and
\eqref{eq:alpha_eigenvalue} imply
\[
  \Phi_\alpha(D_r)
  \eqd \sum_{i\in\N} \chi_i\,\ind{\{Y_i\le r^2\}}.
\]
This observation is closely related to the following proposition,
which is a generalization of Kostlan's result~\cite{Kost92} for the
original Ginibre point process (see also
\cite[Theorem~4.7.1]{HougKrisPereVira09}).

\begin{proposition}\label{prp:Kostlan}
Let $X_i$, $i\in\N$, denote the points of the $\alpha$-Ginibre point
process.
Then, the set $\{|X_i|^2\}_{i\in\N}$ has the same distribution as
$\check{\boldsymbol{Y}} = \{\check{Y}_i\}_{i\in\N}$, which is
extracted from $\boldsymbol{Y}=\{Y_i\}_{i\in\N}$ such that $Y_i$,
$i\in\N$, are mutually independent with $Y_i\sim\Gam(i,\alpha^{-1})$
and each $Y_i$ is added in $\check{\boldsymbol{Y}}$ with probability
$\alpha$ and discarded with $1-\alpha$ independently of others. 
\end{proposition}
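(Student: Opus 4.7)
The plan is to combine the spectral decomposition of $K_\alpha$ established just above the proposition with a rotational-symmetry argument of Kostlan type. First, I would invoke the general construction of a DPP from its Mercer expansion: since $K_\alpha = \sum_{i=1}^\infty \alpha\,\phi_{\alpha,i}(z)\,\Bar{\phi_{\alpha,i}(w)}$ with each eigenvalue equal to $\alpha$, the $\alpha$-Ginibre process $\Phi_\alpha$ admits a two-stage realization (Theorem~4.5.3 of \cite{HougKrisPereVira09}): draw independent Bernoulli$(\alpha)$ variables $\chi_i$, $i\in\N$, and, conditionally on $\chi=(\chi_i)_{i\in\N}$, generate the projection DPP $\Phi^\chi$ with kernel $K^\chi = \sum_{i:\chi_i=1}\phi_{\alpha,i}\otimes\Bar{\phi_{\alpha,i}}$. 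Writing $S=\{i:\chi_i=1\}$, it therefore suffices to show that, conditional on $\chi$, the set $\{|X_j|^2\}$ is distributed as a collection of mutually independent variables $Y_i$, $i\in S$, with $Y_i\sim\Gam(i,\alpha^{-1})$.

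For this conditional claim I would use the monomial form of the eigenfunctions. Writing $z=r\,\ee^{\mathrm{i}\theta}$, one has $\phi_{\alpha,i}(z) = r^{i-1}\,\ee^{\mathrm{i}(i-1)\theta}/\sqrt{(i-1)!\,\alpha^i}$, so that $|\phi_{\alpha,i}(z)|^2$ is a function of $|z|^2$ alone. By the Cauchy--Binet formula, the joint density (with respect to $\prod_\ell\nu_\alpha(\dd z_\ell)$) of the $n=|S|$ points of $\Phi^\chi$ is
\[
  \frac{1}{n!}\,
  \bigl|\det\bigl(\phi_{\alpha,i_k}(z_\ell)\bigr)_{k,\ell=1}^n\bigr|^2,
  \qquad S=\{i_1<\cdots<i_n\}.
\]
Passing to polar coordinates and carrying out the angular integrations by the Andr\'eief / Vandermonde manipulation underlying Kostlan's theorem (cf.\ \cite[Theorem~4.7.1]{HougKrisPereVira09}), one finds that the joint density of the unordered squared moduli $(s_\ell)=(r_\ell^2)$ is the symmetrization of $\prod_{k=1}^n f_{i_k}(s_k)$, where, after pushing forward $|\phi_{\alpha,i}(z)|^2\,\nu_\alpha(\dd z)$ under $z\mapsto|z|^2$,
\[
  f_i(s)
  = \frac{s^{i-1}\,\ee^{-s/\alpha}}{(i-1)!\,\alpha^i},
  \qquad s\ge 0,
\]
which is exactly the $\Gam(i,\alpha^{-1})$ density. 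Hence, given $\chi$, the squared moduli form an independent family $\{Y_i:i\in S\}$ with the claimed marginals. Unconditioning on $\chi$ produces precisely the independent Bernoulli-thinning of $\boldsymbol{Y}=\{Y_i\}_{i\in\N}$ described in the statement, so $\{|X_i|^2\}_{i\in\N}\eqd\check{\boldsymbol{Y}}$.

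The main obstacle is the angular-integration step: verifying that after expanding $|\det(\phi_{\alpha,i_k}(z_\ell))|^2$ as a double sum over permutations and integrating each $\theta_\ell$ over $[0,2\pi)$, only the diagonal terms survive (thanks to the orthogonality of $\ee^{\mathrm{i}(i-1)\theta}$, $i\in\N$) and contribute a factor exactly cancelling the $1/n!$ while separating the radial variables. Once this identity is in hand, the passage from the original Ginibre case ($\alpha=1$) to general $\alpha\in(0,1]$ is mechanical, since only the normalization $\alpha^i$ in $\phi_{\alpha,i}$ and the exponential $\ee^{-|z|^2/\alpha}$ in $\nu_\alpha$ change, producing the shift from $\Gam(i,1)$ to $\Gam(i,\alpha^{-1})$ together with the Bernoulli$(\alpha)$ thinning coming from the non-trivial eigenvalues.
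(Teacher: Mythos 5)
Your argument is correct and is essentially the route the paper takes: the paper justifies the proposition by citing Kostlan's theorem in the form of Theorem~4.7.1 of Hough et al., and your proof reconstructs exactly that theorem (Bernoulli selection of eigenfunctions followed by the angular-orthogonality computation for the monomial eigenfunctions) specialized to $K_\alpha$, with the correct pushforward density $f_i(s)=s^{i-1}\ee^{-s/\alpha}/\bigl((i-1)!\,\alpha^i\bigr)$, i.e.\ the $\Gam(i,\alpha^{-1})$ law. The only point to tidy up is that on all of $\C$ the selected index set $S$ is a.s.\ infinite, so the finite-$n$ determinantal density should be applied to the restrictions to the disks $D_r$ and the limit $r\to\infty$ taken, exactly as in the cited reference.
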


Indeed, the $\alpha$-Ginibre point process~$\Phi_\alpha$ is obtained
from the original Ginibre process $\Phi=\Phi_1$ by retaining each
point of $\Phi$ with probability~$\alpha$ (removing it with
$1-\alpha$) independently, and then applying the homothety of
ratio~$\sqrt{\alpha}$ to the retained points in order to maintain the
original intensity of the Ginibre process~$\Phi$ (\cite{Gold10}).
Proposition~\ref{prp:Kostlan} is useful for analyzing the cellular
network models described in the preceding section since the path-loss
function usually depends only on the distance from a BS.
When we consider the scaled $\alpha$-Ginibre point process with
intensity~$\lambda\in(0,\infty)$ as in Remark~\ref{rmk:scale},
$\Gam(i,\alpha^{-1})$ in the above proposition is replaced by
$\Gam(i,\pi\lambda/\alpha)$.

We can extend Proposition~\ref{prp:Kostlan} to the process under the
Palm distribution.
Applying \eqref{eq:Palm} to \eqref{eq:alpha-kernel}, the kernel
$K_\alpha^o$ of the $\alpha$-Ginibre point process under the reduced
Palm distribution given a point at the origin is
\begin{equation}\label{eq:Palm_Kernel}
  K_\alpha^o (z,w) = \ee^{z\overline{w}/\alpha} - 1,
\end{equation}
with respect to the same reference measure $\nu_\alpha$ in
\eqref{eq:alpha-measure}.
Thus, the first product density is given by
\begin{align*}
  \rho_\alpha^{o (1)}(z)\,\nu_\alpha(\dd z)
  = K_\alpha^o (z,z)\,\nu_\alpha(\dd z)
  = \frac{1}{\pi}\,(1-\ee^{-|z|^2/\alpha})\,\mu(\dd z).
\end{align*}
Note that the $\alpha$-Ginibre point process is no longer stationary
under the Palm distribution and the intensity function is increasing
according to the distance from the origin.
The following Proposition is obtained by applying the kernel
\eqref{eq:Palm_Kernel} to Theorem 4.7.1 of \cite{HougKrisPereVira09}.

\begin{proposition}\label{prp:p_kostlan}
Let $X^o_i$, $i\in\N$, denote the points of the $\alpha$-Ginibre point
process under the reduced Palm distribution.
Then, the set $\{|X^o_i|^2\}_{i \in \N}$ has the same distribution as
$\check{\boldsymbol{Y}}^o = \{\check{Y}^o_i\}_{i\in\N}$, which is
extracted from $\boldsymbol{Y}=\{Y^o_i\}_{i\in\N}$ such that $Y^o_i$,
$i\in\N$, are mutually independent with
$Y^o_i\sim\Gam(i+1,\alpha^{-1})$ and each $Y^o_i$ is added in
$\check{\boldsymbol{Y}}^o$ with probability $\alpha$ and discarded
with $1-\alpha$ independently of others.
\end{proposition}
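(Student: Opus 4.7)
The plan is to imitate the proof of Proposition~\ref{prp:Kostlan}, replacing the original kernel $K_\alpha$ by the reduced Palm kernel $K_\alpha^o$ from \eqref{eq:Palm_Kernel}. The key observation is that $K_\alpha^o$ shares exactly the same radial-monomial eigenstructure as $K_\alpha$, except that the lowest (constant) mode is absent. Indeed, the Taylor expansion yields
\begin{equation*}
  K_\alpha^o(z,w)
  = \ee^{z\Bar w/\alpha}-1
  = \sum_{k=1}^\infty\frac{(z\Bar w)^k}{k!\,\alpha^k}
  = \sum_{i=2}^\infty\alpha\,\phi_{\alpha,i}(z)\,\Bar{\phi_{\alpha,i}(w)},
\end{equation*}
with the same orthonormal basis $\{\phi_{\alpha,i}\}$ from \eqref{eq:eigenfunction}; in other words, Mercer's expansion of $K_\alpha^o$ coincides with that of $K_\alpha$ after deleting the $i=1$ summand.

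Next, I would restrict to a disk~$D_r$. Since the $\phi_{\alpha,i}$ remain orthogonal eigenfunctions of $K_{\alpha,D_r}^o$ for $i\ge 2$, the same computation that produced $\kappa_{\alpha,D_r,i}=\alpha\,P(i,r^2/\alpha)$ in \eqref{eq:alpha_eigenvalue} gives the identical expression for $K_\alpha^o$, valid now only for $i\ge 2$. Applying \eqref{eq:DPP0} then yields
\begin{equation*}
  \Phi_\alpha^o(D_r)
  \eqd \sum_{i=2}^\infty\chi_i\,\ind{\{Y_i\le r^2\}},
\end{equation*}
with $\chi_i\sim\mathrm{Bernoulli}(\alpha)$ independent and $Y_i\sim\Gam(i,\alpha^{-1})$ mutually independent. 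Re-indexing $j=i-1$ puts this in exactly the form of the thinned collection $\check{\boldsymbol{Y}}^o$ of independent $\Gam(j+1,\alpha^{-1})$ variables appearing in the statement, and since this identification holds for every $r>0$, the claim on $\{|X_i^o|^2\}_{i\in\N}$ follows.

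The analysis goes through cleanly because $K_\alpha^o$ depends on $z$ and $w$ only through the combination $z\Bar w$, so the radial symmetry underlying Theorem~4.7.1 of \cite{HougKrisPereVira09} is preserved and the squared moduli decouple from the angular information. The only point that requires care is checking that $K_\alpha^o$ is itself a legitimate determinantal kernel---positive semidefinite, locally trace class, and with spectrum contained in $[0,1]$. This should be immediate: the spectrum already lies in $\{0,\alpha\}\subset[0,1]$ by the expansion above, and the local trace class property transfers from $K_\alpha$ because the two kernels differ only by the nonnegative rank-one summand corresponding to $\phi_{\alpha,1}$. Once these technicalities are disposed of, the Bernoulli--Gamma representation of Proposition~\ref{prp:Kostlan} transports verbatim with a unit shift in the Gamma shape parameter, producing the stated distribution.
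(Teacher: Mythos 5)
Your argument is correct and is essentially the paper's own proof: the paper simply cites Theorem~4.7.1 of \cite{HougKrisPereVira09} applied to the kernel \eqref{eq:Palm_Kernel}, and your Mercer expansion $K_\alpha^o=\sum_{i\ge2}\alpha\,\phi_{\alpha,i}\Bar{\phi_{\alpha,i}}$ is exactly the verification of that theorem's hypothesis, yielding the unit shift in the Gamma shape parameter. One small caution: the disk counts $\Phi_\alpha^o(D_r)\eqd\sum_{i\ge2}\chi_i\ind{\{Y_i\le r^2\}}$ for each fixed $r$ give only one-dimensional marginals, so the joint law of $\{|X_i^o|^2\}$ really does rest on the cited theorem (via the rotation-invariant, monomial-diagonal structure you note) rather than on ``holding for every $r$'' alone.
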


Note that $\check{\boldsymbol{Y}}^o$ in
Proposition~\ref{prp:p_kostlan} is obtained from
$\check{\boldsymbol{Y}}$ in Proposition~\ref{prp:Kostlan} by removing
the exponentially distributed random
variable~$Y_1\sim\Gam(1,\alpha^{-1})$ if it is retained with
probability~$\alpha$ (see \cite{Gold10}).

\section{Coverage analysis}\label{sec:Analysis}

In this section, we show some existing results on the coverage
analysis of the cellular network models described in
Section~\ref{sec:Model}; that is, we give the numerically computable
forms of the coverage probability for the two examples in
Section~\ref{sec:Model}.
Here, the coverage probability is defined as the tail probability
$\Prb(\SINR_o>\theta)$, $\theta>0$, of the SINR in \eqref{eq:SINR},
which represents the probability that the SINR for the typical user
achieves a target threshold~$\theta$.

\subsection{Homogeneous single-antenna network}

We here derive a numerically computable form of the coverage
probability for the homogeneous single-antenna network model in
Example~\ref{ex1}, where the BSs are deployed according to the
$\alpha$-Ginibre point process with intensity~$\lambda\in(0,\infty)$.
The corresponding result for the model with Poisson deployed BSs is
also derived.
The proof for the Poisson deployed BS model mainly follows
\cite{AndrBaccGant11} while that for the $\alpha$-Ginibre deployed BS
model does \cite{MiyoShir14a,NakaMiyo14}.

\begin{theorem}[\cite{AndrBaccGant11,MiyoShir14a,NakaMiyo14}]\label{thm:coverage}
Consider the homogeneous single-antenna cellular network
model in Example~\ref{ex1} with the path-loss
function~$\ell(r)=r^{-2\beta}$, $r>0$, for $\beta>1$, where
$H_i\sim\mathrm{Exp}(1)$, $i\in\N$, (Rayleigh fading) and $G_i$,
$i\in\N$, are i.i.d.
When the point process~$\Phi$ is the homogeneous Poisson point process
with intensity~$\lambda\in(0,\infty)$, the coverage probability for
the typical user is given by
\begin{align}\label{eq:coverage_PPP}
  \Prb(\SINR_o^{(\text{PPP})}>\theta)
  = \int_0^\infty
      \exp\biggl\{
        -\frac{\theta\,w_o}{p}\,
         \biggl(\frac{t}{\pi\,\lambda}\biggr)^\beta
        -t\,\Bigl(1 + \tau(\theta,\beta)\Bigr)
      \biggr\}\,
    \dd t,
\end{align}
where
\begin{equation}\label{eq:tau}
  \tau(\theta,\beta)
  = \frac{\theta^{1/\beta}}{\beta}
    \int_{1/\theta}^\infty
      \bigl(1 - \Lpl_G(u^{-1})\bigr)\,
      u^{-1+1/\beta}\,
    \dd u,
\end{equation}
and $\Lpl_G$ denotes the Laplace transform of $G_i$, $i\in\N$.
On the other hand, when $\Phi$ is the $\alpha$-Ginibre point process
with intensity~$\lambda$,
\begin{align}\label{eq:coverage_GPP}
  \Prb(\SINR_o^{(\text{$\alpha$-GPP})}>\theta)
  = \alpha
    \int_0^\infty
      \exp\biggl\{
        -t -\frac{\theta\,w_o}{p}\,
            \biggl(\frac{\alpha\,t}{\pi\lambda}\biggr)^\beta
      \biggr\}\,
       M_\alpha(t, \theta, \beta)\,S_\alpha(t, \theta, \beta)\,
     \dd t,
\end{align}
where
\begin{align}
  M_\alpha(t,\theta,\beta)
  &= \prod_{i=0}^\infty
       \Bigl[
         1 - \alpha + \alpha\,J_i(t,\theta,\beta)
       \Bigr],
  \label{eq:M}\\
  S_\alpha(t,\theta,\beta)
  &= \sum_{i=0}^\infty
       \frac{t^i}{i!}\,
       \Bigl[
         1 - \alpha + \alpha\,J_i(t,\theta,\beta)
       \Bigr]^{-1},
  \label{eq:S}
\end{align}
with
\begin{equation}\label{eq:J}
  J_i(t,\theta,\beta)
  = \frac{1}{i!}
    \int_t^\infty
      \ee^{-u}\,u^i\,
      \Lpl_G\biggl(\theta\,\biggl(\frac{t}{u}\biggr)^\beta\biggr)\,
    \dd u.
\end{equation}
\end{theorem}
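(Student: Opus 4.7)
The plan is to handle both parts through the common ``Rayleigh fading trick'': since $H_{\eta(o)}\sim\mathrm{Exp}(1)$, conditioning on $\Phi$ gives
\[
  \Prb(\SINR_o>\theta\mid\Phi)
  = \ee^{-\theta|X_{\eta(o)}|^{2\beta}w_o/p}\,
    \Lpl_I\!\bigl(\theta|X_{\eta(o)}|^{2\beta}/p\,\big|\,\Phi\bigr),
\]
where $\Lpl_I(\cdot\mid\Phi)=\prod_{j\ne\eta(o)}\Lpl_G(\cdot\,p\,|X_j|^{-2\beta})$ is the conditional Laplace transform of the interference. The problem then reduces to computing the outer expectation and integrating over the distance to the serving BS.

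For the Poisson case I would use the Campbell--Mecke formula together with Slivnyak's theorem: the serving BS sits at a point $x$ with intensity $\lambda\,\dd x$, the remaining points form an independent PPP of intensity $\lambda$ on $\R^2$, and the nearest-BS condition (no other points in $D_{|x|}$) contributes the void factor $\ee^{-\pi\lambda|x|^2}$. The Laplace functional on the exterior yields
\[
  \Lpl_I(\theta r^{2\beta}/p)
  = \exp\!\Bigl\{-\lambda\!\int_{|y|>r}\!\bigl(1-\Lpl_G(\theta(r/|y|)^{2\beta})\bigr)\dd y\Bigr\}.
\]
Polar coordinates combined with the substitution $u=\theta^{-1}(v/r)^{2\beta}$ rewrite this exponent as $-\pi\lambda r^2\,\tau(\theta,\beta)$, and the outer change of variable $t=\pi\lambda r^2$ produces \eqref{eq:coverage_PPP}.

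For the $\alpha$-Ginibre case I would invoke Proposition~\ref{prp:Kostlan} to transfer the computation to the representation $\{|X_i|^2\}_{i\in\N}\eqd\{Y_i:\chi_i=1\}$, where the $Y_i\sim\Gam(i,\pi\lambda/\alpha)$ are mutually independent and the $\chi_i$'s are i.i.d.\ Bernoulli$(\alpha)$ retention indicators independent of the $Y_i$'s. Since $R^2=\min\{Y_i:\chi_i=1\}$ and the interference Laplace product are both symmetric set functions of the BS positions, this transfer is lossless. Letting $k^*$ index the minimum and decomposing on its value gives
\begin{multline*}
  \Prb(\SINR_o>\theta)
  = \sum_{k\ge1}\int_0^\infty\ee^{-\theta y^\beta w_o/p}\,\alpha\,f_{Y_k}(y)\\
  \times\prod_{j\ne k}\Bigl[(1-\alpha)+\alpha\int_y^\infty\Lpl_G(\theta(y/y')^\beta)f_{Y_j}(y')\,\dd y'\Bigr]\dd y,
\end{multline*}
the bracket coming from the inner expectation over $(\chi_j,Y_j)$ restricted to $\{\chi_j=0\text{ or }Y_j>y\}$. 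Changing variable $t=\pi\lambda y/\alpha$ standardises the Erlang densities, so the bracket becomes $(1-\alpha)+\alpha\,J_{j-1}(t,\theta,\beta)$ while $\alpha f_{Y_k}(y)\,\dd y$ becomes $\alpha\,t^{k-1}\ee^{-t}/(k-1)!\,\dd t$. Pulling the full infinite product $M_\alpha$ out of the sum over $k$ and reindexing $i=k-1$ in the residual factor produces $M_\alpha(t,\theta,\beta)\,S_\alpha(t,\theta,\beta)$, which is \eqref{eq:coverage_GPP}.

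The principal obstacle is the $\alpha$-Ginibre step: Proposition~\ref{prp:Kostlan} provides only an equality of unordered sets, so one must verify carefully that $R$ and the Laplace-transform product depend on $\Phi$ only through the unordered family of squared distances (enabling the transfer to the independent $(\chi_i,Y_i)$ model), then execute the conditioning on $k^*$ rigorously. Secondary technicalities are the convergence of the infinite product $M_\alpha$---from $\Lpl_G\le1$ and the Erlang tail making $1-J_i$ summable in $i$---and the interchange of the sum over $k$ with the integral, both justified by monotone convergence since every term is nonnegative.
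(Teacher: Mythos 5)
Your proposal follows essentially the same route as the paper: the Rayleigh-fading conditioning to reach the product of Laplace transforms, the void-probability plus exterior Poisson generating functional for the PPP case, and Proposition~\ref{prp:Kostlan} with a decomposition over which retained $Y_k$ is minimal for the $\alpha$-Ginibre case, with the same changes of variables yielding $\tau$, $J_i$, $M_\alpha$ and $S_\alpha$. The extra care you take over the symmetric-function transfer, the convergence of the infinite product, and the sum--integral interchange fills in details the paper leaves as ``some manipulations,'' so the argument is correct.
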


For the proof of \eqref{eq:coverage_PPP}--\eqref{eq:tau} for the
Poisson deployed BS model, we use the probability generating
functional for point processes.

\begin{definition}
Let $\Phi=\{X_i\}_{i\in\N}$ denote a point process on $\R^d$ with
intensity measure~$\Lambda$; that is, $\Exp\Phi(C)=\Lambda(C)$ for
$C\in\B(\R^d)$.
For any measurable function~$v$:~$\R^d\to[0,1]$ such that
$\int_{\R^d}\bigl(1-v(x)\bigr)\,\Lambda(\dd x) < \infty$, the
probability generating functional of the point process~$\Phi$ is
defined as
\[
  \GF_\Phi(v)
  = \Exp\biggl[
      \prod_{i\in\N} v(X_i)
    \biggr].
\]
\end{definition}

\begin{proposition}[e.g., {\cite[Sec.~9.4]{DaleVere08}}]
For the Poisson point process~$\Phi$ on $\R^d$ with intensity
measure~$\Lambda$, its probability generating functional is given as
\begin{equation}\label{eq:Poisson_GF}
  \GF_\Phi^{(\text{PPP})}(v)
  = \exp\biggl\{
      - \int_{\R^d}\bigl(1-v(x)\bigr)\,\Lambda(\dd x)
    \biggr\}.
\end{equation}
\end{proposition}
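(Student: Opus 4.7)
The plan is to reduce the computation to bounded sets on which $\Phi$ has a Poisson-distributed number of points, and then to stitch the pieces together using the complete independence (independent scattering) property of the Poisson point process. Concretely, I would fix a countable partition $\{B_k\}_{k\in\N}$ of $\R^d$ into disjoint bounded Borel sets, so that each $\lambda_k := \Lambda(B_k)$ is finite, and exploit the fact that the restrictions $\Phi|_{B_k}$ are mutually independent and each is Poisson with intensity measure $\Lambda|_{B_k}$.

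First I would handle a single block $B_k$. The defining properties of a finite Poisson process give $\Phi(B_k)\sim\mathrm{Poisson}(\lambda_k)$ and, conditional on $\Phi(B_k)=n$, the $n$ points of $\Phi|_{B_k}$ are i.i.d.\ with distribution $\Lambda(\cdot\cap B_k)/\lambda_k$. Writing $V_k = \int_{B_k} v(x)\,\Lambda(\dd x)$, conditioning on $n$ and summing the resulting Poisson series collapses neatly:
\[
  \Exp\biggl[\prod_{X_i\in B_k}v(X_i)\biggr]
  = \sum_{n=0}^\infty \ee^{-\lambda_k}\,\frac{\lambda_k^n}{n!}\,
    \biggl(\frac{V_k}{\lambda_k}\biggr)^n
  = \exp\biggl(-\!\int_{B_k}\bigl(1-v(x)\bigr)\,\Lambda(\dd x)\biggr).
\]
This is a ``baby version'' of \eqref{eq:Poisson_GF} restricted to $B_k$.

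Next I would pass to the whole of $\R^d$. For each $N$, the partial products $P_N = \prod_{k\le N}\prod_{X_i\in B_k} v(X_i)$ form a nonincreasing sequence in $[0,1]$ that converges a.s.\ to $\prod_{i\in\N} v(X_i)$. By the independence of the blocks $\Phi|_{B_k}$, the expectation factorizes as a finite product, giving $\Exp[P_N] = \exp\bigl(-\sum_{k\le N}\int_{B_k}(1-v)\,\dd\Lambda\bigr)$. The hypothesis $\int_{\R^d}(1-v)\,\dd\Lambda<\infty$ implies $\sum_k \int_{B_k}(1-v)\,\dd\Lambda<\infty$, so dominated convergence on the left and continuity of the exponential on the right deliver \eqref{eq:Poisson_GF} in the limit $N\to\infty$.

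The main obstacle is really only a technicality: justifying the passage from finite to infinite partitions. It dissolves because $v\le 1$ makes $P_N$ monotone and uniformly bounded by $1$, while the summability of $1-v$ against $\Lambda$ makes the deterministic exponents converge absolutely. The only other ingredients — independent scattering across disjoint bounded sets and the conditional i.i.d.\ structure of a finite Poisson process — may be invoked as standard defining properties of the Poisson point process.
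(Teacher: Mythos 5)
Your argument is correct. Note, however, that the paper itself gives no proof of this proposition at all: it is quoted as a standard fact with a pointer to Daley and Vere-Jones, Sec.~9.4, so there is no ``paper's approach'' to compare against. Your route --- partition $\R^d$ into bounded blocks $B_k$ with $\lambda_k=\Lambda(B_k)<\infty$, compute the generating functional on each block from the Poisson-number/conditionally-i.i.d.\ description, factorize by independent scattering, and pass to the limit --- is one of the two standard textbook proofs (the other proceeds via the joint probability generating function of the counts $(\Phi(A_1),\dots,\Phi(A_n))$ for simple functions $v$ and then a monotone limit). All the steps you flag as potential obstacles are indeed harmless: the single-block computation is a clean Poisson-series collapse (with the degenerate case $\lambda_k=0$ giving $1=\ee^0$ on both sides), the partial products $P_N$ are nonincreasing and bounded in $[0,1]$ so bounded (or monotone) convergence applies, and the hypothesis $\int_{\R^d}(1-v)\,\dd\Lambda<\infty$ is exactly what makes the limiting exponent finite so that the right-hand side of \eqref{eq:Poisson_GF} is strictly positive and well defined. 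The only implicit assumption worth stating explicitly is that $\Lambda$ is boundedly finite, which is needed for the partition into blocks of finite mass to exist; this is part of the standing definition of an intensity measure, so it is not a gap.
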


Note that, if $\Phi$ is stationary with intensity~$\lambda$, then
$\Lambda(\dd x)$ above is replaced by $\lambda\,\dd x$.

\begin{proof}[Proof of Theorem~\ref{thm:coverage}:]
In the definition of the SINR in~\eqref{eq:SINR}, each $H_i$ is
independent of $\Phi=\{X_i\}_{i\in\N}$ and
$\{G_j\}_{j\in\N\setminus\{i\}}$.
Also, $\eta(o)$ is determined by $\Phi=\{X_i\}_{i\in\N}$.
Thus, conditioning on $\Phi=\{X_i\}_{i\in\N}$ and
$\{G_j\}_{j\in\N\setminus\{\eta(o)\}}$, and using
$H_i\sim\mathrm{Exp}(1)$, $i\in\N$, we have
\begin{align*}
  \Prb(\SINR_o>\theta)
  = \Prb\biggl(
      H_{\eta(o)} > \theta\,\frac{I_o(\eta(o))+w_o}{p\,\ell(|X_{\eta(o)}|)}
    \biggr)
  = \Exp\biggl[
      \exp\biggl\{
        -\theta\,\frac{I_o(\eta(o))+w_o}{p\,\ell(|X_{\eta(o)}|)}
      \biggr\}
    \biggr].
\end{align*}
Furthermore, the definition of the
interference~\eqref{eq:interference} and the Laplace transform of
$G_j$, $j\in\N$, lead to
\begin{align}\label{eq:cov0}
  \Prb(\SINR_o>\theta)
   = \Exp\biggl[
       \exp\biggl\{
         -\frac{\theta\,w_o}{p\,\ell(|X_{\eta(o)}|)}
       \biggr\}
       \prod_{j\in\N\setminus\{\eta(o)\}}\!
         \Lpl_G\biggl(
           \theta\,\frac{\ell(|X_j|)}{\ell(|X_{\eta(o)}|)}
         \biggr)
     \biggr],
\end{align}
which is the starting point for both the Poisson and $\alpha$-Ginibre
deployed BS cellular network models.

We first show \eqref{eq:coverage_PPP}--\eqref{eq:tau} for the Poisson
deployed BS model.
For the homogeneous Poisson point process~$\Phi$ on $\R^2$ with
intensity~$\lambda$, the distribution for the distance to the nearest
point from the origin is given by
\begin{equation}\label{eq:Poisson_nearest}
  \Prb(|X_{\eta(o)}|>r)
  = \Prb\bigl(\Phi(D_r)=0\bigr)
  = \ee^{-\lambda \pi r^2},
\end{equation}
where $D_r$ denotes the disk centered at the origin with radius~$r$.
Given $|X_{\eta(o)}|=r$, other points of $\Phi$ also follow the
Poisson point process on $\R^2\setminus D_r$, and thus applying the
probability generating functional~\eqref{eq:Poisson_GF}, we obtain
\begin{align}\label{eq:Poisson_cov0}
  \Exp\biggl[
    \prod_{j\in\N}
      \Lpl_G\biggl(
        \theta\,\frac{\ell(|X_j|)}{\ell(r)}
      \biggr)
  \biggm|
    X_j \in \R^2\!\setminus\! D_r,\: j\in\N
  \biggr]
  &= \exp\biggl\{
       -\lambda
        \int_{|x|>r}
          \biggl[
            1 - \Lpl_G\biggl(\theta\,\frac{\ell(|x|)}{\ell(r)}\biggr)
          \biggr]\,
        \dd x
     \biggr\}
  \nonumber\\
  &= \exp\biggl\{
       -2\pi\lambda
         \int_r^\infty
           \biggl[
             1 - \Lpl_G\biggl(\theta\,\frac{\ell(s)}{\ell(r)}\biggr)
         \biggr]\,s\,
       \dd s
     \biggr\}.
\end{align}
Hence, applying \eqref{eq:Poisson_nearest}, \eqref{eq:Poisson_cov0}
and $\ell(r)=r^{-2\beta}$ to \eqref{eq:cov0} yields
\eqref{eq:coverage_PPP}--\eqref{eq:tau} after some manipulations.

On the other hand, for the $\alpha$-Ginibre deployed BS model, we use
$\boldsymbol{Y}=\{Y_i\}_{i\in\N}$ in Proposition~\ref{prp:Kostlan}
such that $Y_i$, $i\in\N$, are mutually independent and each $Y_i$ is
retained with probability $\alpha$ independently of others.
Thus, dividing the cases in each of which the point corresponding to
$Y_i$ is retained and associated with the typical user,
\eqref{eq:cov0} with $\ell(r)=r^{-2\beta}$ reduces to
\begin{align*}
  \Prb(\SINR_o^{(\text{$\alpha$-GPP})}>\theta)
  &= \alpha
     \sum_{i\in\N}
       \Exp\biggl[
         \exp\biggl\{
           -\frac{\theta\,w_o}{p}\,{Y_i}^\beta
         \biggr\}
         \prod_{j\in\N\setminus\{i\}}
           \biggl\{
             1 - \alpha
             + \alpha\,\Lpl_G\biggl(
                 \theta\,\biggl(\frac{Y_i}{Y_j}\biggr)^\beta
               \biggr)\,
               \ind{\{Y_j\ge Y_i\}}
           \biggr\}
       \biggr].
\end{align*}
Finally, applying $Y_i\sim\Gam(i,\pi\lambda/\alpha)$, $i\in\N$ yields
\eqref{eq:coverage_GPP}--\eqref{eq:J} after some manipulations.
\end{proof}

Figure~\ref{fig:coverage} shows the comparison result of the coverage
probability with different values of $\alpha$.
Each plot indicates the coverage probability for a given value of
$\theta$ in the case of $w_o=0$ (noise-free) and $\beta=2$ (i.e.,
$\ell(r)=r^{-4}$).
It seems that the coverage probability is increasing in $\alpha$.
However, a numerical result in \cite{MiyoShir14b} shows that the
coverage probability is not always monotone in $\alpha$ as
$\theta\to\infty$.

\begin{figure}[t]%
\begin{center}
\includegraphics[width=.6\hsize]{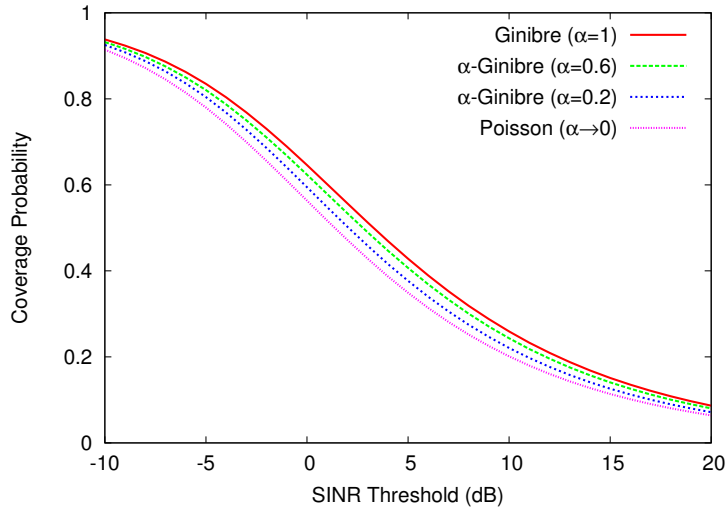}%
\end{center}%
\caption{Comparison of coverage probability in terms of $\alpha$ in
  the single tier model ($\ell(r)=r^{-4}$, no
  noise).~\cite{NakaMiyo14}}%
  \label{fig:coverage}%
\end{figure}

\subsection{Two-tier Ginibre-Poisson overlaid network}

In this subsection, we consider the case of $K=2$ in
Example~\ref{ex2}, where the BSs of tier~$1$ are deployed according to
the $\alpha$-Ginibre point process $\Phi_1$ with intensity~$\lambda_1$
while the BSs of tier~$2$ follow the homogeneous Poisson point
process~$\Phi_2$ with intensity~$\lambda_2$.
This represents that, in heterogeneous multi-tier cellular networks, the
macro BSs are deployed rather systematically while the femto BSs are
located in an opportunistic manner.
We assume that the two point processes~$\Phi_1$ and $\Phi_2$ are
independent of each other.
In the coverage of users, the target thresholds can differ for the two
tiers; that is, the SINR should be larger than $\theta_k$ when a user
is served by a BS of tier~$k$ for $k=1$, $2$. 

For the ease of understanding, we impose some extent of simplifying
setting (see \cite{KobaMiyo16} for a general setting).
First, we ignore the noise power and set $w_o=0$, in this case, the
SINR in \eqref{eq:SINR} is called the signal-to-interference
ratio~(SIR).
Furthermore, we only consider the case where the number of users
served at each BS is equal to the number of antennas; that is,
$m_k=\psi_k$ for $k=1$, $2$.
This case reduces to the single-input single-output (SISO)
transmission when $m_k=\psi_k=1$ while this is called the full form of
space-division multiple access (full SDMA) transmission when
$m_k=\psi_k>1$.
In this setting, $H_i\sim\mathrm{Exp}(1)$, $i\in\N$, since
$\delta_k=1$ for each $k\in\K$, and they are mutually independent.

\begin{theorem}\label{thm:MIMO}
Consider the two-tier multi-antenna cellular network model in
Example~\ref{ex2} with $K=2$, $\ell_1(r)=r^{-2\beta_1}$ and
$\ell_2(r)=r^{-2\beta_2}$.
Then, under the setting described above, the coverage probability for
the typical user is given by
\begin{subequations}
\begin{align}
  \Prb(\SIR_o^{(\mathrm{MIMO})} > \theta_{\xi_{\eta(o)}})
  &= \alpha
     \int_0^\infty
       M_\alpha(t, \theta_1, \beta_1)\,S_\alpha(t, \theta_1,\beta_1)
  \label{eq:MIMO1}\\
  &\qquad\mbox{}\times
       \exp\biggl\{
         -t
         - C_\alpha^{(1,2)}(t)\,
           \bigl(1 + \tau_{1,2}(\theta_1,\beta_2\bigr)
       \biggr\}\,
     \dd t
  \label{eq:MIMO2}\\
  &\quad\mbox{}
   + \int_0^\infty
       M_\alpha^{(2,1)}(t,\theta_2,\beta_1,\beta_2)
  \label{eq:MIMO3}\\
  &\quad\qquad\mbox{}\times
       \exp\Bigl\{
         -t\,\bigl( 1 + \tau(\theta_2,\beta_2) \bigr)
       \Bigr\}\
     \dd t
  \label{eq:MIMO4}
\end{align}
\end{subequations}
where $M_\alpha(t,\theta,\beta)$ and $S_\alpha(t,\theta,\beta)$ are
the same as in \eqref{eq:M} and \eqref{eq:S} respectively with
$\Lpl_G(s) = \Lpl_{G,1}(s)=(1+s)^{-\psi_1}$ in $J_i$ in \eqref{eq:J}.
Moreover,
\begin{align*}
  C_\alpha^{(1,2)}(t)
  &= \pi\lambda_2\,
     \biggl(\frac{b_2p_2}{b_1p_1}\biggr)^{1/\beta_2}\,
     \biggl(\frac{\alpha t}{\pi\lambda_1}\biggr)^{\beta_1/\beta_2},
  \\
  \tau_{1,2}(\theta,\beta)
  &= \frac{\theta^{1/\beta}}{\beta}
    \int_{1/\theta}^\infty
      \biggl[
        1 - \biggl(\frac{u}{u+b_1/b_2}\biggr)^{\psi_2}
      \biggr]\,
      u^{-1+1/\beta}\,
    \dd u,
\end{align*}
and
\[
  M_\alpha^{(2,1)}(t,\theta,\beta_1,\beta_2)
  = \prod_{i=0}^\infty
      \Bigl[
        1 - \alpha + \alpha\,J_{\alpha,i}^{(2,1)}(t,\theta,\beta_1,\beta_2)
      \Bigr],
\]
where
\begin{align*}
  J_{\alpha,i}^{(2,1)}(t,\theta,\beta_1,\beta_2)
  &= \frac{1}{i!}
     \int_{C_\alpha^{(2,1)}(t)}^\infty
       \ee^{-u}\,u^i\,
       \biggl[
         1 + \theta\,\frac{p_1}{p_2}\,
             \frac{(t/\pi\lambda_2)^{\beta_2}}
                  {(\alpha u/\pi\lambda_1)^{\beta_1}}
       \biggr]^{-\psi_1}
     \dd u,
\end{align*}
with
\[
  C_\alpha^{(2,1)}(t)
  = \frac{\pi\lambda_1}{\alpha}\,
     \biggl(
       \frac{b_1p_1}{b_2p_2}
     \biggr)^{1/\beta_1}\,
     \biggl(
       \frac{t}{\pi\lambda_2}
     \biggr)^{\beta_2/\beta_1}.
\]
$\tau(\theta,\beta)$ is also the same as in \eqref{eq:tau} with
$\Lpl_G(u^{-1})=\Lpl_{G,2}(u^{-1})=\bigl[u/(u+1)\bigr]^{\psi_2}$.
\end{theorem}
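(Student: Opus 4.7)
The plan is to mimic the approach used for Theorem~\ref{thm:coverage}, splitting the total coverage probability according to the tier of the serving BS. Since the flexible cell-association rule is a deterministic function of $\Phi_1\cup\Phi_2$, one writes
\[
  \Prb(\SIR_o^{(\mathrm{MIMO})} > \theta_{\xi_{\eta(o)}})
  = \Prb(\SIR_o > \theta_1,\ \xi_{\eta(o)}=1) + \Prb(\SIR_o > \theta_2,\ \xi_{\eta(o)}=2),
\]
so that the two terms on the right-hand side of the theorem correspond to these two cases. Inside each term the Rayleigh assumption $H_{\eta(o)}\sim\mathrm{Exp}(1)$ (which follows from $\delta_k=1$) converts the inequality $H_{\eta(o)} > \theta_k I_o(\eta(o))/(p_k\ell_k(|X_{\eta(o)}|))$ into an expectation of $\exp\{-\theta_k I_o(\eta(o))/(p_k\ell_k(|X_{\eta(o)}|))\}$, which then factors through the tier-specific interferer Laplace transforms $\Lpl_{G,k}(s)=(1+s)^{-\psi_k}$, exactly as in~\eqref{eq:cov0}.

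For the tier-1-serving contribution, I would condition on the serving tier-1 BS being at squared distance $r_1^2$. The biased-association constraint then requires $|X_j|\ge r_1$ for all remaining tier-1 points and forbids any tier-2 point inside the disk of radius $R_{12}(r_1)=(b_2p_2/(b_1p_1))^{1/(2\beta_2)}\,r_1^{\beta_1/\beta_2}$. Since $\Phi_1$ and $\Phi_2$ are independent, the tier-2 contribution factors cleanly: the Poisson PGFL~\eqref{eq:Poisson_GF} applied to $\R^2\setminus D_{R_{12}(r_1)}$, combined with $\Lpl_{G,2}(s)=(1+s)^{-\psi_2}$ in the radial integral, yields $\exp\{-C_\alpha^{(1,2)}(t)\,(1+\tau_{1,2}(\theta_1,\beta_2))\}$ after the change of variable $t=\pi\lambda_1 r_1^2/\alpha$; the void-probability part $-C_\alpha^{(1,2)}(t)$ is precisely $-\pi\lambda_2 R_{12}(r_1)^2$. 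For the tier-1 side I would apply Proposition~\ref{prp:Kostlan} to $\Phi_1$ (scaled, so $Y_j\sim\Gam(j,\pi\lambda_1/\alpha)$), sum over which retained $Y_i$ indexes the serving BS, and marginalize independently over the Bernoulli retention of the other $Y_j$; this is verbatim the calculation that produced~\eqref{eq:coverage_GPP}--\eqref{eq:J} in the proof of Theorem~\ref{thm:coverage} with $\Lpl_G$ replaced by $\Lpl_{G,1}$, and it delivers the $\alpha\,\ee^{-t}\,M_\alpha(t,\theta_1,\beta_1)\,S_\alpha(t,\theta_1,\beta_1)$ factor.

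For the tier-2-serving contribution the roles reverse. The reduced-Palm invariance of Poisson processes (Slivnyak) and the nearest-neighbor distribution~\eqref{eq:Poisson_nearest} handle $\Phi_2$: conditional on the serving tier-2 BS at distance $r_2$, the rest of $\Phi_2$ remains Poisson on $\R^2\setminus D_{r_2}$, and the computation in~\eqref{eq:Poisson_cov0} produces $\exp\{-t\,(1+\tau(\theta_2,\beta_2))\}$ with $t=\pi\lambda_2 r_2^2$ and $\Lpl_{G,2}$. The biased-association constraint now excludes tier-1 points from the disk of radius $R_{21}(r_2)=(b_1p_1/(b_2p_2))^{1/(2\beta_1)}\,r_2^{\beta_2/\beta_1}$. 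Applying Proposition~\ref{prp:Kostlan} to $\Phi_1$ once more and keeping only those (thinned) Gamma radii that exceed $R_{21}(r_2)^2$ turns each tier-1 interferer Laplace-transform factor into a truncated integral; the threshold $(\pi\lambda_1/\alpha)\,R_{21}(r_2)^2$ is precisely $C_\alpha^{(2,1)}(t)$, and taking the infinite product over $i$ of $[1-\alpha+\alpha\,J_{\alpha,i}^{(2,1)}]$ produces $M_\alpha^{(2,1)}$. Note that there is no extra $\alpha$ prefactor here because the serving BS is a Poisson (tier-2) point rather than a retained Ginibre point.

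The main obstacle will be bookkeeping the biased association when $\beta_1\ne\beta_2$: the two forbidden disks $D_{R_{12}(r_1)}$ and $D_{R_{21}(r_2)}$ mix the two exponents non-trivially, so the cleanest strategy is to carry out the substitutions $t=\pi\lambda_1 r_1^2/\alpha$ in the first case and $t=\pi\lambda_2 r_2^2$ in the second, and to verify that the constants $C_\alpha^{(1,2)}$, $C_\alpha^{(2,1)}$ and the lower limits in $\tau_{1,2}$ and $J_{\alpha,i}^{(2,1)}$ come out in the stated form. A secondary but purely computational point is that the factor $u/(u+b_1/b_2)$ appearing in $\tau_{1,2}$—instead of the $u/(u+1)$ of the single-tier Poisson formula~\eqref{eq:tau}—arises because the lower limit of the radial integral for the tier-2 interferers is set by the bias-weighted threshold $R_{12}(r_1)$ rather than by $r_1$ itself.
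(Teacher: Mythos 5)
Your proposal is correct and follows essentially the same route as the paper's appendix proof: the same decomposition by serving tier, the same use of Proposition~\ref{prp:Kostlan} with Bernoulli retention for the Ginibre tier, the Poisson PGFL/nearest-point distribution for the Poisson tier, and the same bias-weighted exclusion radii $R_{1,2}$ and $R_{2,1}$ producing $C_\alpha^{(1,2)}$, $C_\alpha^{(2,1)}$ and the $b_1/b_2$ shift in $\tau_{1,2}$. Your observations about the absent $\alpha$ prefactor in the tier-2 term and the identification $C_\alpha^{(1,2)}(t)=\pi\lambda_2 R_{1,2}(r_1)^2$ are both accurate.
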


The proof is placed in the appendix and 
we here make a short remark on Theorem~\ref{thm:MIMO}.
The formula of the coverage probability in the theorem consists of two
parts \eqref{eq:MIMO1}--\eqref{eq:MIMO2} and
\eqref{eq:MIMO3}--\eqref{eq:MIMO4}.
The first part corresponds to that the typical user is served by a BS
of tier~$1$, so that the term in \eqref{eq:MIMO1} is given as the same
form as in \eqref{eq:coverage_GPP}.
The term in \eqref{eq:MIMO2} corresponds to the cumulative
interference from all the BSs of tier~$2$, which can be seen similar
to the second term in the exponential in \eqref{eq:coverage_PPP}.
The second part \eqref{eq:MIMO3}--\eqref{eq:MIMO4} corresponds to that
the typical user is served by a BS of tier~$2$, so that the term in
\eqref{eq:MIMO4} has the same form as the second term in the
exponential in \eqref{eq:coverage_PPP}.
The term in \eqref{eq:MIMO3} corresponds to the cumulative
interference from all the BSs of tier~$1$, so that $M_\alpha^{(2,1)}$
has a similar form to $M_\alpha$ in \eqref{eq:M} (the term
corresponding to $S_\alpha$ does not appear in this case).

\section{Conclusion}\label{sec:cncl}

In this tutorial, we have introduced the $\alpha$-Ginibre point
process as the model of BS deployments in cellular networks.
First, we have reviewed the definition and some useful properties of
this process, and then we have seen the two existing results on the
coverage analysis of cellular network models, where the BSs are
deployed according to the $\alpha$-Ginibre point processes.
The authors now hope that the readers will use the ($\alpha$-)Ginibre
point process and challenge themselves to various problems arising in
future cellular networks.

Finally, when we use the Ginibre and other determinantal point
processes as the models of BS deployments, we might face to a
computation problem.
Although the obtained formulas are indeed numerically computable, as
seen in \eqref{eq:coverage_GPP}--\eqref{eq:J} and
\eqref{eq:MIMO1}--\eqref{eq:MIMO4}, they include infinite sums and
infinite products, which may lead to the time-consuming computation.
One direction to avoid this problem could be some kinds of asymptotics
and/or approximation~(see, e.g.,
\cite{MiyoShir14b,NagaMiyoShir14,GantHaen16,WeiDengZhouHaen16,MiyoShir16}
for this direction).

\section*{Acknowledgments}
The first author's work was supported by the Japan Society for the
Promotion of Science (JSPS) Grant-in-Aid for Scientific Research (C)
16K00030.
The second author's work was supported by JSPS Grant-in-Aid for
Scientific Research (B) 26287019.


\appendix

\section{Proof of Theorem~\ref{thm:MIMO}}
We divide the coverage probability into two cases according to the
tier of the BS associated with the typical user;
\begin{align}\label{eq:MIMO_a1}
  \Prb\bigl(\SIR_o^{(\mathrm{MIMO})} > \theta_{\xi_{\eta(o)}}\bigr)
  &= \Prb\bigl(\SIR_o^{(\mathrm{MIMO})} > \theta_1,\: \xi_{\eta(o)}=1\bigr)
   + \Prb\bigl(\SIR_o^{(\mathrm{MIMO})} > \theta_2,\: \xi_{\eta(o)}=2\bigr),
\end{align}
and consider the two terms separately.

\paragraph{Case of $\xi_{\eta(o)}=1$:}
Let $\N_1$ and $\N_2$ denote the random partition of $\N$ such that
$\N_k = \{i\in\N: \xi_i=k\}$ for $k=1$, $2$.
Then, the interference~\eqref{eq:interference} for $i\in\N_1$ is
written as
\[
  I_o(i) = \sum_{j\in\N_1\setminus\{i\}}p_1\,G_j\,\ell_1(|X_j|)
         + \sum_{j\in\N_2}p_2\,G_j\,\ell_2(|X_j|).
\]
Applying this to the first term on the right-hand side of
\eqref{eq:MIMO_a1} yields
\begin{align}\label{eq:MIMO_a2}
  &\Prb\bigl(\SIR_o^{(\mathrm{MIMO})} > \theta_1,\: \xi_{\eta(o)}=1\bigr)
  \nonumber\\
  &= \Exp\Biggl[
       \exp\biggl\{
         -\theta_1\,
          \frac{I_o(\eta(o))}{p_1\,\ell_1(|X_{\eta(o)}|)}
       \biggr\}\,
       \ind{\{\xi_{\eta(o)}=1\}}
     \Biggr]
  \nonumber\\
  &= \Exp\Biggl[
       \prod_{j\in\N_1\setminus\{\eta(o)\}}
         \Lpl_{G,1}\Biggl(
           \theta_1\,
           \biggl(\frac{|X_{\eta(o)}|}{|X_j|}\biggr)^{2\beta_1}
         \Biggr)
       \prod_{j\in\N_2}
         \Lpl_{G,2}\Biggl(
           \theta_1\frac{p_2}{p_1}\frac{|X_{\eta(o)}|^{2\beta_1}}{|X_j|^{2\beta_2}}
         \Biggr)\,
       \ind{\{\xi_{\eta(o)}=1\}}
     \Biggr],
\end{align}
where $\ell_1(r)=r^{-2\beta_1}$ and $\ell_2(r)=r^{-2\beta_2}$ are also
used.
Note here that $\{\eta(o)=i\}$ with $i\in\N_1$ implies
$\{|X_j|\ge|X_i|\}$ for $j\in\N_1$ while for $j\in\N_2$,
\[
  \{b_1p_1|X_i|^{-2\beta_1} \ge b_2p_2|X_j|^{-2\beta_2}\}
  = \{|X_j|\ge R_{1,2}(|X_i|)\},
\]
with $R_{1,2}(r) = (b_2p_2/(b_1p_1))^{1/(2\beta_2)}r^{\beta_1/\beta_2}$.
Thus, using $\boldsymbol{Y}=\{Y_i\}_{i\in\N}$ in
Proposition~\ref{prp:Kostlan}, \eqref{eq:MIMO_a2} further reduces to
\begin{align}\label{eq:MIMO_a3}
  \Prb\bigl(\SIR_o^{(\mathrm{MIMO})} > \theta_1,\: \xi_{\eta(o)}=1\bigr)
  &= \alpha\sum_{i\in\N}
       \Exp\Biggl[
         \prod_{j\in\N\setminus\{i\}}
           \Biggl\{
             1 - \alpha
             + \alpha\,
               \Lpl_{G,1}\Biggl(
                 \theta_1\,\Biggl(\frac{Y_i}{Y_j}\Biggr)^{\beta_1}
               \Biggr)\,
               \ind{\{Y_j>Y_i\}}
           \Biggr\}
  \nonumber\\
  &\qquad\mbox{}\times
         \prod_{j\in\N}
           \Lpl_{G,2}\Biggl(
             \theta_1\frac{p_2}{p_1}\frac{{Y_i}^{\beta_1}}{|X_{2,j}|^{2\beta_2}}
           \Biggr)\,
           \ind{\{|X_{2,j}|\ge R_{1,2}({Y_i}^{1/2})\}}
       \Biggr],
\end{align}
where $\{X_{2,j}\}_{j\in\N}$ follows the homogeneous Poisson point
process with intensity~$\lambda_2$.
Conditioning on $Y_i$ and applying the generating
functional~\eqref{eq:Poisson_GF} to the second infinite product on the
right-hand side of \eqref{eq:MIMO_a3}, we have
\begin{align}\label{eq:MIMO_a4}
  &\Exp\Biggl[
     \prod_{j\in\N}
       \Lpl_{G,2}\Biggl(
         \theta_1\frac{p_2}{p_1}\frac{{Y_i}^{\beta_1}}{|X_{2,j}|^{2\beta_2}}
       \Biggr)\,
       \ind{\{|X_{2,j}|\ge R_{1,2}({Y_i}^{1/2})\}}
   \Biggm| Y_i \Biggr]
  \nonumber\\
  &= \exp\Biggl\{
       - \lambda_2\!\!
         \int_{\R^2}
           \Biggl[
             1 \!-\! \Lpl_{G,2}\Biggl(
                   \theta_1\frac{p_2}{p_1}\frac{{Y_i}^{\beta_1}}{|x|^{2\beta_2}}
                 \Biggr)
                 \ind{\{|x|\ge R_{1,2}({Y_i}^{1/2})\}}
           \Biggr]
         \dd x
     \Biggr\}
  \nonumber\\
  &= \exp\Biggl\{
       -\pi\lambda_2
        \biggl(\frac{b_2p_2}{b_1p_1}\biggr)^{1/\beta_2}
        {Y_i}^{\beta_1/\beta_2}\,
        \Bigl(1+\tau_{1,2}(\theta_1,\beta_2)\Bigr)
     \Biggr\},
\end{align}
where
\[
  \tau_{1,2}(\theta,\beta)
  = \frac{\theta^{1/\beta}}{\beta}
    \int_{1/\theta}^\infty
      \biggl[
        1 - \biggl(\frac{u}{u+b_1/b_2}\biggr)^{\psi_2}
      \biggr]\,
      u^{-1+1/\beta}\,
    \dd u.
\]
Hence, substituting \eqref{eq:MIMO_a4} to \eqref{eq:MIMO_a3} and
applying $Y_i\sim\Gam(i,\pi\lambda_1/\alpha)$, $i\in\N$, we obtain
\eqref{eq:MIMO1}--\eqref{eq:MIMO2} after some manipulations.

\paragraph{Case of $\xi_{\eta(o)}=2$:}
Similar to the above, the second term on the right-hand side of
\eqref{eq:MIMO_a1} is given as
\begin{align}\label{eq:MIMO_a5}
  &\Prb\bigl(\SIR_o^{(\mathrm{MIMO})} > \theta_2,\: \xi_{\eta(o)}=2\bigr)
  \nonumber\\
  &= \Exp\Biggl[
       \prod_{j\in\N_1}
         \Lpl_{G,1}\Biggl(
           \theta_2\,
           \frac{p_1}{p_2}
           \frac{|X_{\eta(o)}|^{2\beta_2}}{|X_j|^{2\beta_1}}
         \Biggr)
       \prod_{j\in\N_2\setminus\{\eta(o)\}}
         \Lpl_{G,2}\Biggl(
           \theta_2\,\biggl(\frac{|X_{\eta(o)}|}{|X_j|}\biggr)^{2\beta_2}
         \Biggr)\,
       \ind{\{\xi_{\eta(o)}=2\}}
     \Biggr].
\end{align}
Now, $\{\eta(o)=i\}$ with $i\in\N_2$ implies that
\[
  \{b_2p_2|X_i|^{-2\beta_2} \ge b_1p_1|X_j|^{-2\beta_1}\}
  = \{|X_j| \ge R_{2,1}(|X_i|)\},
\]
with $R_{2,1}(r) =
(b_1p_1/(b_2p_2))^{1/(2\beta_1)}r^{\beta_2/\beta_1}$ for $i\in\N_1$
while $\{|X_j|\ge|X_i|\}$ for $j\in\N_2$.
Therefore, using the distribution of $|X_{\eta(o)}|$ in
\eqref{eq:Poisson_nearest} and also $\boldsymbol{Y}=\{Y_i\}_{i\in\N}$ in
Proposition~\ref{prp:Kostlan}, \eqref{eq:MIMO_a5} reduces to
\begin{align}\label{eq:MIMO_a6}
  \Prb\bigl(\SIR_o^{(\mathrm{MIMO})} > \theta_2,\: \xi_{\eta(o)}=2\bigr)
  &= \int_0^\infty
       2\pi\lambda_2\,r\,\ee^{-\lambda_2\pi r^2}
       \Exp\Biggl[
         \prod_{j\in\N}
           \Biggl\{
             1 - \alpha
             + \alpha\,\Lpl_{G,1}\Biggl(
                 \theta_2\frac{p_1}{p_2}\frac{r^{2\beta_2}}{{Y_j}^{\beta_1}}
               \Biggr)\,
               \ind{\{Y_j\ge R_{2,1}(r)^2\}}
           \Biggr\}
       \Biggr]
  \nonumber\\
  &\qquad\mbox{}\times
       \Exp\Biggl[
         \prod_{j\in\N}
           \Lpl_{G,2}\Biggl(
             \theta_2\biggl(\frac{r}{|X_{2,j}|}\biggr)^{2\beta_2}
           \Biggr)
       \Biggm| |X_{2,j}|\ge r, j\in\N\Biggr]\,
     \dd r,
\end{align}
where by applying \eqref{eq:Poisson_GF}, the second expectation in the
integrand of \eqref{eq:MIMO_a6} is equal to
\begin{align*}
  \Exp\Biggl[
    \prod_{j\in\N}
      \Lpl_{G,2}\Biggl(
        \theta_2\biggl(\frac{r}{|X_{2,j}|}\biggr)^{2\beta_2}
      \Biggr)
  \Biggm| |X_{2,j}|\ge r, j\in\N\Biggr]
  &= \exp\Biggl\{
       -2\pi\lambda_2
        \int_r^\infty
          \Biggl[
            1 - \Lpl_{G,2}\Biggl(
                  \theta_2\biggl(\frac{r}{s}\biggr)^{2\beta_2}
                \Biggr)
          \Biggr]\,
        s\,\dd s
     \Biggr\}.
\end{align*}
Hence, substituting this to \eqref{eq:MIMO_a6} and applying
$Y_i\sim\Gam(i,\pi\lambda_1/\alpha)$, $i\in\N$, we obtain 
\eqref{eq:MIMO3}--\eqref{eq:MIMO4} after some manipulations.
\leavevmode\unskip\penalty9999\hbox{}\nobreak\hfill\quad\hbox{\qedsymbol}

\end{document}